\newtheorem{theorem}{Theorem}[section]
\newtheorem{thm}{Theorem}[section]
\numberwithin{thm}{section}
\newtheorem{remark}[thm]{Remark}
\newtheorem{lemma}[thm]{Lemma}
\newenvironment{proof}{\noindent\\ \noindent\relax{\sc
     Proof}}{{\samepage\par\nopagebreak\hbox
     to\hsize{\hfill$\Box$}}}
\newcommand{\be}{\begin{equation}} \newcommand{\ee}{\end{equation}}
\newcommand{\bd}{\begin{displaymath}} \newcommand{\ed}{\end{displaymath}}
\newcommand{\ba}{\begin{align}} \newcommand{\ea}{\end{align}}
\newcommand{\baa}{\begin{align*}} \newcommand{\eaa}{\end{align*}}
\newcommand{\ben}{\begin{enumerate}} \newcommand{\een}{\end{enumerate}}
\newcommand{\bi}{\begin{itemize}} \newcommand{\ei}{\end{itemize}}
\newcommand{\ud}{\mathrm{d}}
\newcommand{\E}[1]{\operatorname{E}\left[ #1 \right]}
\newcommand{\Expectation}[1]{\operatorname{E}\left[ #1 \right]}
\newcommand{\variance}[1]{\operatorname{Var}\left[ #1 \right]}
\algnewcommand\And{\textbf{and}}
\begin{document}


\title{Limit distribution of the quartet balance index for Aldous's $\beta\ge 0$--model}
\author{Krzysztof Bartoszek} 

\maketitle

\begin{abstract}
This paper builds up on T. Mart\'inez--Coronado, A. Mir, F. Rossell\'o and G. Valiente's work 
``A balance index for phylogenetic trees based on quartets'', introducing a new 
balance index for trees. We show here that this balance index, in the case of Aldous's $\beta\ge 0$--model,
convergences weakly to a distribution that can be characterized as the fixed point of a contraction operator
on a class of distributions.
\end{abstract}

Keywords : 
Balance index; contraction method; phylogenetic tree; tree shape; weak convergence

\section{Introduction}\label{secIntro}
Phylogenetic trees (from a graph theory perspective trees, connected graphs without any cycles,
that have a distinct node, called ``root'', that is interpreted as the ``start'' of the tree)
 are key to evolutionary biology. However, they are not easy to summarize or compare
as it might not be obvious how to tackle their topologies, understood as the internal branching
structure of the trees. 
Therefore, many summary
indices have been proposed in order to ``project'' a tree into $\mathbb{R}$. Such indices have as their
aim to quantify some property of the tree and one of the most studied properties is the symmetry
of the tree. Tree symmetry is commonly captured by a balance index. Multiple balance indices 
have been proposed, Sackin's \cite{MSac1972}, Colless' \cite{DCol1982} or the 
total cophenetic index \cite{AMirFRosLRot2013}. A compact introduction to phylogenetics,
containing in particular a list of tree asymmetry measures (p. $562$--$564$),
can be found in \cite{JFel2004}.
This work accompanies a newly proposed balance index---the quartet index (QI, \cite{TCorAMirFRosGVal2018arXiv}).

One of the reasons for introducing summary indices for trees is to use them for significance testing---does
a tree come from a given probabilistic model. Obtaining the distribution (for a given $n$--number of contemporary
species, i.e. leaves of the tree,
or in the limit $n\to \infty$) of indices is usually difficult and often is done only for the
``simplest'' Yule (pure--birth \cite{GYul1924})  tree case
and sometimes uniform model (see e.g. \cite{DAld1991,MSteAMcK2001}).

Using the contraction method, central limit theorems were found
for various balance indices, like 
the total cophenetic index (Yule model case \cite{KBar2018})
and jointly for Sackin's  and Colless' (in the Yule and uniform model cases \cite{MBluOFraSJan2006}).
Furthermore, in \cite{MBluOFra2006} it was shown that Sackin's index has the same weak limit as the number of
comparisons of the quicksort algorithm \cite{CHoa1962}, both after normalization of course.

In \cite{HChaMFuc2010}  the number of occurrences of patterns in a tree are considered, where
a pattern is understood as ``any subset of the set of all phylogenetic trees of fixed size $k$''. For a
tree with $n$ leaves such a pattern will satisfy the recursion

$$
X_{n,k} \stackrel{\mathcal{D}}{=} X_{L_{n},k} + X^{\ast}_{n-L_{n},k}
$$
where $X_{n,k}$, $X^{\ast}_{n,k}$ and $L_{n}$ are independent,
$X_{n,k}\stackrel{\mathcal{D}}{=}X^{\ast}_{n,k}$ and $L_{n}$  is the size of the left subtree branching from the root.
For the Yule and uniform models they derived central limit theorems (normal limit distribution) with Berry--Esseen 
bounds and Poisson approximations in the total variation distance.
The above description is rather abstract but can be related to in a more direct way. The term $n$ is the number of leaves
of the tree (i.e. nodes of degree $1$). The pattern of fixed size $k$ is a generic term, but 
in Table $1$ in \cite{HChaMFuc2010} concrete examples are given, $k$--pronged nodes, $k$--caterpillars, or nodes
with minimal clade size $k$. In the present manuscript it will be the number of fully balanced subtrees with $k=4$ leaf
nodes. However, in our case the recursion will be of a non--homogeneous form, hence the results from
\cite{HChaMFuc2010} cannot be carried over. The random variable $X_{n,k}$ is the number of occurrences
of the given pattern (of size $k$) in a tree of size $n$. In principle the index $k$ could be dropped at this
description level, but we kept it here for consistency with \cite{HChaMFuc2010}.

Even though the pure--birth model seems to be very widespread in the phylogenetics community,
more complex models need to be studied, especially in the context of tree balance. 
From  Lemma $4$ in \cite{SRocSSni2013} it can be deduced that Yule trees have to be rather balanced---as 
the maximum quartet weight (maximum of number of randomly placed marks along branches
over induced subtrees on four leaves) is asymptotically proportional to the expectation of the tree's height.

In this work here, using the contraction method, we show
convergence in law of the (scaled and centred) quartet index and derive a representation 
(as a fixed point of a particular contraction operator)
of the weak--limit. Remarkably, this is possible not only for the 
Yule tree case but also for Aldous's more general  $\beta$--model (in the $\beta\ge 0$ regime).

The paper is organized as follows. In Section \ref{secPrelim} we 
introduce Aldous's $\beta$--model and the quartet index. 
In Section \ref{secCTdist} 
we prove our main result---Thm. \ref{thmYnYinfConv} via the contraction method. When studying the limit behaviour 
of recursive--type indices for pure--birth binary trees one has that for each internal
node the leaves inside its clade are uniformly split into to sub--clades as the node splits.
However, in Aldous's $\beta$--model this is not the case, the split is according to
a BetaBinomial distribution, and a much finer analysis is required to show 
weak--convergence, with $n$, of the recursive--type index to the fixed point
of the appropriate contraction. Theorem \ref{thmYnYinfConv} is not specific
for the quartet index but covers a more general class of models, where each
internal node split divides its leaf descendants according to a BetaBinomial 
distribution (with $\beta\ge 0$). 
In Section \ref{secLimQI} we apply Thm. \ref{thmYnYinfConv} to the quartet index
and characterize its weak limit. Then, in Section \ref{secSimul} we
illustrate the results with simulations. Finally, in the Appendix we provide R code
used to simulate from this weak limit.

\section{Preliminaries}\label{secPrelim}
\subsection{Aldous's $\beta$--model for phylogenetic trees}
Birth--death models are popular choices for modelling the evolution
of phylogenetic trees. 
However, in \cite{DAld1996,DAld2001}
a different class of models was proposed---the so--called $\beta$--model for binary 
phylogenetic trees.

The main idea behind this model is to consider a (suitable) family $\{ q_{n} \}_{n=2}^{\infty}$
of symmetric, $q_{n}(i)=q_{n}(n-i)$, probability distributions on the natural numbers.
In particular $q_{n} : \{1,\ldots,n-1\} \to [0,1]$. The tree grows in a natural way.
The root node of a $n$--leaf tree defines a partition of the $n$ nodes into two sets of sizes 
$i$ and $n-i$ $(i \in \{1,\ldots,n-1\})$. We randomly choose the number of leaves of the left subtree,
 $L_{n}=i$, according to the distribution $q_{n}$ and this induces the number of leaves, $n-L_{n}$, in the right subtree.
We then repeat recursively in the left and right subtrees, i.e. splitting according to the
distributions $q_{L_{n}}$ and $q_{n-L_{n}}$ respectively. 
Notice that due to $q_{n}$'s symmetry the terms left and right do not have any
particular meaning attached. 

In \cite{DAld1996} it was proposed to consider a one--parameter, $-2 \le \beta \le \infty$, family of probability 
distributions,

\be \label{eqqni}
q_{n}(i) = \frac{1}{a_{n}(\beta)} \frac{\Gamma(\beta+i)\Gamma(\beta+n-i)}{\Gamma(i)\Gamma(n-i)},~~~~1 \le i \le n-1,
\ee
where $a_{n}(\beta)$ is the normalizing constant and $\Gamma(\cdot)$ the Gamma function.
We may actually recognize this as the BetaBinomial$(n-2,\beta+1,\beta+1)$ distribution and represent

\be \label{eqqniBB}
q_{n}(i) = B(\beta+1,\beta+1)^{-1}\int\limits_{0}^{1}\left(\binom{n-2}{i-1}\tau^{i-1}(1-\tau)^{n-i-1}\right)\tau^{\beta}(1-\tau)^{\beta}\ud \tau,
\ee
where $B(a,b)$ is the Beta function with parameters $a$ and $b$. 
Notice that we slightly changed $n$ to $n-2$ and $i$ to $i-1$ in the right side of the equations
with respect to \cite{DAld1996} in order to have better correspondence with the rest of the manuscript here.
Writing informally, from the form of the probability distribution
function, Eq. \eqref{eqqniBB}, we can see that if we would condition under the integral on $\tau$, then
we obtain a binomially distributed random variable. This is a key observation that is the intuition
for the analysis presented here.

Particular values of $\beta$ correspond to some well known models.
The uniform tree model is represented by $\beta=-3/2$, and the pure birth, Yule, model by $\beta=0$.
The limit case of $\beta=\infty$, is $q_{n}(i) \to \binom{n-2}{i-1}2^{-(n-2)}$, i.e. the 
binomial distribution, with success probability equalling $0.5$. This corresponds to the
so--called ``symmetric binary trie'' in computer science literature (e.g. Ch. $5.3$ in \cite{HMah1992})
and was mentioned as the ``random partition tree'' in the evolutionary biology literature
\cite{WMadMSla1991}.

Of particular importance to our work is the limiting behaviour of the scaled
size of the left (and hence right) subtree, $n^{-1}L_{n}$. Lemma $3$ in \cite{DAld1996}
characterizes these asymptotics.

\begin{lemma}[Lemma 3 for $\beta > -1$, {\cite{DAld1996}}]\label{lemLem3Ald}
\begin{enumerate}
\item $\beta=\infty$, $n^{-1}L_{n} \stackrel{\mathcal{D}}{\to} \frac{1}{2}$ ;
\item $-1<\beta<\infty$, $n^{-1}L_{n} \stackrel{\mathcal{D}}{\to} \tau_{\beta}$, where
$\tau_{\beta}$ has the Beta distribution

\be \label{eqBtreedist}
f(x)=\frac{\Gamma(2\beta+2)}{\Gamma^{2}(\beta+1)}x^{\beta}(1-x)^{\beta},~~0<x<1.
\ee
\end{enumerate}
\end{lemma}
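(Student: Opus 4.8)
The plan is to exploit the conditional, de Finetti--type structure encoded in the BetaBinomial representation \eqref{eqqniBB} and then to push the weak law of large numbers through the mixing measure. Concretely, \eqref{eqqniBB} says that if $\tau \sim \mathrm{Beta}(\beta+1,\beta+1)$ and, conditionally on $\tau$, $B_n \sim \mathrm{Binomial}(n-2,\tau)$, then $L_n \stackrel{\mathcal{D}}{=} 1 + B_n$: the integrand in \eqref{eqqniBB} is exactly the product of the $\mathrm{Beta}(\beta+1,\beta+1)$ density of $\tau$ with the conditional mass $\mathbb{P}(B_n = i-1 \mid \tau)$. So the whole question reduces to controlling the $\tau$--mixed quantity $n^{-1}(1+B_n)$.

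For $-1<\beta<\infty$ I would fix a bounded continuous test function $g$ and set $\phi_n(\tau) := \E{g\!\left(n^{-1}(1+B_n)\right)\mid\tau}$. Up to the negligible shift of the argument by $1/n$ and the replacement of $n-2$ by $n$, this inner conditional expectation is the Bernstein polynomial of $g$ evaluated at $\tau$. Since $\mathrm{Var}\!\left(B_n/(n-2)\mid\tau\right) = \tau(1-\tau)/(n-2) \le 1/(4(n-2)) \to 0$ uniformly in $\tau\in[0,1]$, Chebyshev's inequality gives $n^{-1}(1+B_n)\to\tau$ in probability conditionally on $\tau$, whence $\phi_n(\tau)\to g(\tau)$ pointwise (in fact uniformly, this being the classical Bernstein approximation theorem). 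As $|\phi_n|\le\|g\|_\infty$, bounded convergence against the $\mathrm{Beta}(\beta+1,\beta+1)$ mixing measure yields
\be
\E{g(n^{-1}L_n)} = \int_{0}^{1}\phi_n(\tau)\,\frac{\tau^{\beta}(1-\tau)^{\beta}}{B(\beta+1,\beta+1)}\,\ud\tau \longrightarrow \int_{0}^{1} g(\tau)\,\frac{\tau^{\beta}(1-\tau)^{\beta}}{B(\beta+1,\beta+1)}\,\ud\tau .
\ee
Letting $g$ range over all bounded continuous functions, this is weak convergence of $n^{-1}L_n$ to the $\mathrm{Beta}(\beta+1,\beta+1)$ law, whose density $\frac{1}{B(\beta+1,\beta+1)}\tau^{\beta}(1-\tau)^{\beta} = \frac{\Gamma(2\beta+2)}{\Gamma^{2}(\beta+1)}\tau^{\beta}(1-\tau)^{\beta}$ is exactly \eqref{eqBtreedist}.

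For $\beta=\infty$ the mixing measure degenerates: $q_n(i)\to\binom{n-2}{i-1}2^{-(n-2)}$, so $L_n-1$ is directly $\mathrm{Binomial}(n-2,\tfrac12)$ and the weak law of large numbers gives $n^{-1}L_n\to\tfrac12$ in probability, hence in distribution. A clean alternative to the bounded--convergence argument, legitimate because a law supported on the compact set $[0,1]$ is determined by its moments, is the method of moments: conditioning on $\tau$ and using the binomial factorial moments gives $\E{(n^{-1}L_n)^{k}} = \E{\tau^{k}} + O(n^{-1})$ for every fixed $k$, matching the moments of $\mathrm{Beta}(\beta+1,\beta+1)$. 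I do not anticipate a genuine obstacle: the only delicate point is interchanging the $n\to\infty$ limit with integration against the mixing density, and this is immediate from the uniform boundedness of $\phi_n$ (equivalently from the uniformity of the binomial concentration, which holds because $\tau(1-\tau)$ is bounded on $[0,1]$). The one substantive step is recognising the conditional binomial structure already displayed in \eqref{eqqniBB}.
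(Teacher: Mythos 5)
The paper does not actually prove Lemma~\ref{lemLem3Ald}: it is imported verbatim as Lemma~3 of \cite{DAld1996}, so there is no internal proof to compare yours against. Your argument is correct and self-contained. The mixture identification you start from is exactly what the paper records in Eq.~\eqref{eqqniBB} and flags informally just below it (conditioning under the integral on $\tau$ gives a binomial), and the conditional coupling $(L_{n}-1)\mid\tau \sim \mathrm{Binomial}(n-2,\tau)$ that you exploit is the same coupling the author later uses in the proof of Thm.~\ref{thmYnYinfConv}. The individual steps are sound: the conditional variance bound $\tau(1-\tau)/(n-2)\le 1/(4(n-2))$ is uniform in $\tau$, so $\phi_{n}\to g$ boundedly and the interchange of the $n\to\infty$ limit with integration against the $\mathrm{Beta}(\beta+1,\beta+1)$ density is justified by dominated convergence for every $-1<\beta<\infty$ (for $-1<\beta<0$ the mixing density is unbounded at the endpoints but it is still a probability density, and your dominating function is the constant $\Vert g \Vert_{\infty}$, so nothing breaks); the normalization $B(\beta+1,\beta+1)^{-1}=\Gamma(2\beta+2)/\Gamma^{2}(\beta+1)$ matches Eq.~\eqref{eqBtreedist}; and the $\beta=\infty$ case is the plain weak law for $\mathrm{Binomial}(n-2,\tfrac{1}{2})$. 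The method-of-moments variant you sketch is also legitimate, since the limit law is supported on $[0,1]$ and hence moment-determined. In short: a correct proof of a statement the paper only cites, and one that is in the same spirit as the machinery the paper builds around it.
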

\subsection{Quartet index}
In \cite{TCorAMirFRosGVal2018arXiv} a new type of balance
index for discrete (i.e. without branch lengths, or in the language of graph
theory weights assigned to branches) phylogenetic trees---the quartet index.
This index is based on considering the number of so--called quartets of each type made up by the leaves
of the tree. A (rooted) quartet is the induced subtree (a subtree formed by removing
all but some given set of leaves and then removing all, except the root, degree two nodes) from choosing some four leaves.
We should make a point here about the used nomenclature. Usually in the 
phylogenetic literature a quartet is an unrooted tree on four leaves (e.g. \cite{CSemMSte2003}). 
However, here we consider rooted trees and following \cite{TCorAMirFRosGVal2018arXiv}
by (rooted) quartet we mean a rooted tree on four leaves. We will from now on
write quartet for this, dropping the ``rooted'' clarification.

For a given tree $T$, let $\mathcal{P}_{4}(T)$ be the set of quartets of the tree.
Then, the quartet index of $T$ is defined as

\be
QI(T) = \sum\limits_{\mathcal{P}_{4}(T)} QI(Q),
\ee
where $QI(Q)$ assigns a predefined value to a specific quartet (i.e. given tree topology on four leaves). 
When the tree is a binary
one (as here) there are only two possible topologies on four leaves (see Fig. \ref{figTreeB4K4}).
Following \cite{TCorAMirFRosGVal2018arXiv}, Table 1 therein,
we assign the value $0$ to $K_{4}$ quartets and $1$ to $B_{4}$ quartets. Therefore, the
QI for a binary tree (QIB) will be

\be
QIB(T) = \mathrm{number~of~} B_{4} \mathrm{~quartets~of~} T.
\ee

\begin{figure}[!ht]
\centering
\includegraphics[width=0.43\textwidth]{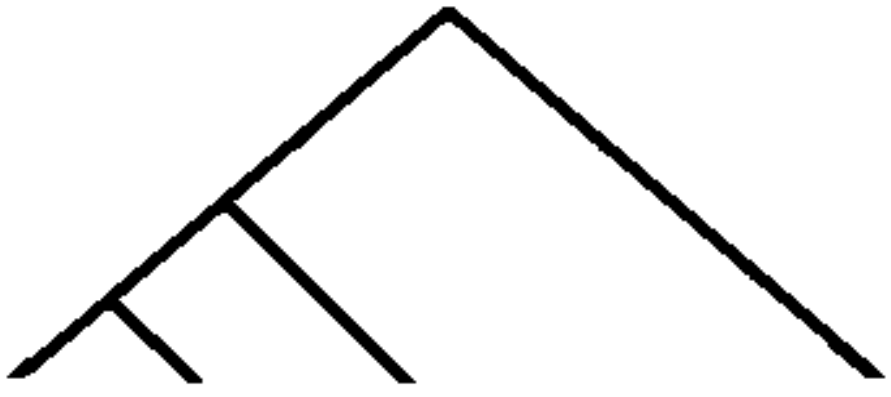}
\includegraphics[width=0.45\textwidth]{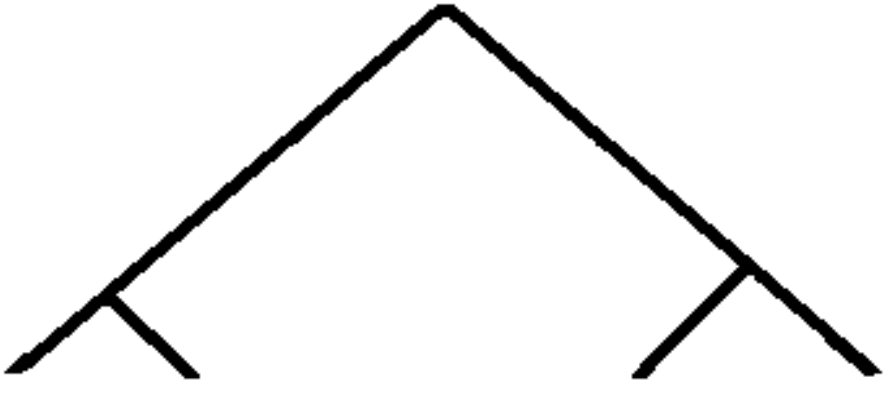}
\caption{
The two possible rooted quartets for a binary tree. Left: $K_{4}$ the four leaf rooted caterpillar tree 
(also known as a comb or pectinate tree),
right: $B_{4}$ the fully balanced tree on four leaves 
(also known as a fork, see e.g. \cite{BChoSSni2007} for some nomenclature).
\label{figTreeB4K4}
}
\end{figure}

Importantly for us in \cite{TCorAMirFRosGVal2018arXiv} it is shown in 
Lemma $4$ therein
that for $n>4$, the quartet index has a recursive representation as

\be\label{eqRecursQI}
QIB(T_{n}) = QIB(T_{L_{n}}) + QIB(T_{n-L_{n}}) + \binom{L_{n}}{2}\binom{n-L_{n}}{2},
\ee
where $T_{n}$ is the tree on $n$ leaves.

In \cite{TCorAMirFRosGVal2018arXiv} various models of tree growth were considered, Aldous's $\beta$--model, 
Ford's $\alpha$--model (\cite{DFor2005arXiv}, but see also \cite{TCorAMirFRos2018arXiv})
and Chen--Ford--Winkel's $\alpha$--$\gamma$--model \cite{BCheDForMWin2009}. In this work we will focus on the 
Aldous's $\beta\ge 0$--model of tree growth and characterize the limit distribution, as the number of 
leaves, $n$, grows to infinity, of the QI. We will take advantage of the 
recursive representation of Eq. \eqref{eqRecursQI} that allows for the usage of the
powerful contraction method. 

We require knowledge of the mean and variance of the QI 
for Aldous's $\beta$--model 
and these are (Corollaries $4$ and $7$ in \cite{TCorAMirFRosGVal2018arXiv}) 

\be\label{eqMeanVarQI}
\begin{array}{rcl}
\Expectation{QIB(T_{n})} & = & \frac{3\beta+6}{7\beta+18}\binom{n}{4}\\
\variance{QIB(T_{n})} & = & \frac{\left(\beta+2 \right)\left(2\beta^{2}+9\beta+12 \right)}{2\left(7\beta+18 \right)^{2}\left(127\beta^{3}+1383\beta^{2}+4958\beta+5880 \right)}n^{8} + O(n^{7}).
\end{array}
\ee

\section{Contraction method approach}\label{secCTdist}

Consider the space $D$ of distribution functions with finite second moment and first
moment equalling $0$. On $D$ we define the Wasserstein metric

$$
d(F,G) = \inf \Vert X-Y \Vert_{2}
$$
where $\Vert \cdot \Vert_{2}$ denotes the $L_{2}$ norm and the infimum is over all $X\sim F$, $Y\sim G$.
Notice that convergence in $d$ induces convergence in distribution.

Let $\tau \in [0,1]$ be a random variable whose distribution is not a Dirac $\delta$ at $0$ nor at $1$.
For $r\in \mathbb{N}_{+}$ define the transformation $S:D \to D$ by

\be\label{eqS}
S(F) = \mathcal{L}\left(\tau^{r}Y' + (1-\tau)^{r}Y'' + C(\tau) \right),
\ee
where $\mathcal{L}\left(X\right)$ denotes the law of the random variable $X$, $Y', Y'', \tau$ are independent, 
$Y', Y'' \sim F$,  $\tau \in [0,1]$; moreover we assume that $\tau$ satisfies, for all $n$,

\be\label{eqCondpni}
2\sum\limits_{i=1}^{n} p_{n,i} \left(\frac{i}{n}\right)^{2r} < 1,
\ee
where $p_{n,i} = P((i-1)/n < \tau \le  i/n)$
and the function $C(\cdot)$ is of the form

\be\label{eqCtau}
C(\tau) = \sum\limits_{r_{1}+r_{2}\le r}C_{r_{1},r_{2}}\tau^{r_{1}}(1-\tau)^{r_{2}}
\ee
for some constants $C_{r_{1},r_{2}}$ and furthermore satisfies $\E{C(\tau)}=0$.
By Thms. $3$ and $4$ in \cite{URos1992} $S$ is well defined, has a unique fixed point
and for any $F \in D$ the sequence $S^{n}(F)$ converges exponentially fast in the 
$d$ metric to $S$'s fixed point.
Using the exact arguments used to show  Thm. $2.1$ in \cite{URos1991} one can show that
the map $S$ is a contraction. Only the Lipschitz constant of convergence will differ
being $\sqrt{C_{\tau}}$, where $C_{\tau}=\max\{\E{\tau^{2r}},\E{(1-\tau)^{2r}}\}$ in our case. 
Notice that as $\tau\in[0,1]$ and is non--degenerate at the edges, then $C_{\tau}<1$ and we  have a
contraction.

We now state the main result of our work. We show weak convergence, with a characterization
of the limit for a class of recursively defined models.

\begin{theorem}[cf. Thm. $3.1$ in {\cite{URos1991}}]\label{thmYnYinfConv}
For $n\ge 2$, $\beta>0$ let $L_{n} \in \{1,\ldots,n-1\}$ be such that
$(L_{n}-1)$ is $\mathrm{BetaBinomial}(n-2,\beta+1,\beta+1)$
distributed and 
$\tau \sim \mathrm{Beta}(\beta+1,\beta+1)=:F_{\tau}$ distributed. 
Starting from the Dirac $\delta$ at $0$, i.e. $Y_{1}=0$ and convention
$\mathrm{BetaBinomial}(0,\beta+1,\beta+1)=\delta_{0}$,
for $r\in \mathbb{N}_{+}$ 
such that the condition of Eq. \eqref{eqCondpni} is met with the previous
choice of $F_{\tau}$, define recursively the sequence of random variables,

$$
Y_{n} = \left(\frac{L_{n}}{n}\right)^{r}Y_{L_{n}}+\left(1-\frac{L_{n}}{n}\right)^{r}Y_{n-L_{n}}+C_{n}(L_{n}),
$$
where the function $C_{n}(\cdot)$ is of the form

\be\label{eqCni}
C_{n}(i)=n^{-r}\left(
\sum\limits_{r_{1}+r_{2}+r_{3}\le r}C_{r_{1},r_{2},r_{3}}i^{r_{1}}(n-i)^{r_{2}}n^{r_{3}}+h_{n}(i) \right),
\ee
where $\Expectation{C_{n}(L_{n})}=0$ and $\sup_{i}n^{-r}h_{n}(i)\to 0$.
If $\E{Y_{n}^{2}}$ is uniformly bounded then, the random variable $Y_{n}$ converges in the Wasserstein $d$--metric to the random variable
$Y_{\infty}$ whose distribution satisfies the unique fixed point of $S$ (Eq. \ref{eqS}).
\end{theorem}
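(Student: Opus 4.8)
The plan is to run the contraction method directly in the Wasserstein metric $d$, comparing the finite-$n$ recursion against the fixed-point equation of $S$. Write $Y_\infty$ for the unique fixed point of $S$ (whose existence is already guaranteed by the Rösler theorems cited above); since $Y_\infty\in D$ we have $\E{Y_\infty^2}<\infty$. Set $a_n:=d(Y_n,Y_\infty)$. First I would record two preliminary facts. The sequence $(a_n)$ is bounded: by hypothesis $\E{Y_n^2}$ is uniformly bounded, and $a_n\le\Vert Y_n\Vert_2+\Vert Y_\infty\Vert_2$, so $a_n\le K<\infty$ for all $n$. Also every $Y_n$ is centred, by an easy induction using $\E{C_n(L_n)}=0$ together with $\E{Y_k}=0$ for $k<n$; this centring is exactly what will annihilate the cross terms in the $L_2$ computations below.

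Next I would construct one coupling carrying $Y_n$, $Y_\infty$ and an intermediate variable. Exploiting the BetaBinomial structure, draw $\tau\sim F_\tau$ and, conditionally on $\tau$, set $L_n=1+\mathrm{Bin}(n-2,\tau)$, so that $(L_n-1)$ has the required law while $L_n/n\to\tau$ (cf. Lemma \ref{lemLem3Ald}). Independently draw $Y_\infty^{(1)},Y_\infty^{(2)}$, i.i.d.\ copies of $Y_\infty$; and, given $L_n=i$, couple $Y_i$ to $Y_\infty^{(1)}$ and $Y_{n-i}$ to $Y_\infty^{(2)}$ optimally, so that conditionally the $L_2$ distances realise $a_i$ and $a_{n-i}$. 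Put
\[ W=(L_n/n)^rY_\infty^{(1)}+(1-L_n/n)^rY_\infty^{(2)}+C_n(L_n),\qquad \tilde Y_\infty=\tau^rY_\infty^{(1)}+(1-\tau)^rY_\infty^{(2)}+C(\tau), \]
so that $Y_n$ and $W$ share $L_n$ and $C_n(L_n)$, while $W$ and $\tilde Y_\infty\sim Y_\infty$ share $Y_\infty^{(1)},Y_\infty^{(2)}$. The triangle inequality gives $a_n\le\Vert Y_n-W\Vert_2+\Vert W-\tilde Y_\infty\Vert_2$. A conditional computation in which $C_n(L_n)$ cancels, centring kills the cross term, and the symmetry $q_n(i)=q_n(n-i)$ folds the two halves together, yields
\[ \Vert Y_n-W\Vert_2^2=2\sum_{i=1}^{n-1}P(L_n=i)\,(i/n)^{2r}\,a_i^2. \]

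The crux of the argument, and the step I expect to be the main obstacle, is showing that $\varepsilon_n:=\Vert W-\tilde Y_\infty\Vert_2\to0$; this is precisely the finer analysis forced by the BetaBinomial (rather than uniform) splitting. Expanding $W-\tilde Y_\infty$ and again discarding cross terms by independence and centring of $Y_\infty^{(1)},Y_\infty^{(2)}$ gives
\[ \varepsilon_n^2=\Var{Y_\infty}\big(\E{((L_n/n)^r-\tau^r)^2}+\E{((1-L_n/n)^r-(1-\tau)^r)^2}\big)+\E{(C_n(L_n)-C(\tau))^2}. \]
Conditionally on $\tau$ the law of large numbers for $\mathrm{Bin}(n-2,\tau)$ gives $L_n/n\to\tau$, and boundedness in $[0,1]$ upgrades this to $L_2$, so the first two summands vanish by dominated convergence. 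For the last term I would substitute $i=L_n\approx n\tau$ into Eq.\ \eqref{eqCni}: the prefactor $n^{-r}$ annihilates every monomial $i^{r_1}(n-i)^{r_2}n^{r_3}$ with $r_1+r_2+r_3<r$; each top-degree monomial converges to $\tau^{r_1}(1-\tau)^{r_2}$; and $\sup_i n^{-r}h_n(i)\to0$ controls the remainder. Identifying the limit with $C(\tau)$ of Eq.\ \eqref{eqCtau} (i.e.\ $C_{r_1,r_2}=C_{r_1,r_2,\,r-r_1-r_2}$) yields $C_n(L_n)\to C(\tau)$ in $L_2$, whence $\varepsilon_n\to0$.

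Finally I would close with a $\limsup$ argument. Let $\alpha=\limsup_n a_n\le K<\infty$. Fixing $\delta>0$, choose $N$ with $a_i\le\alpha+\delta$ for $i\ge N$; splitting the sum for $\Vert Y_n-W\Vert_2^2$ at $N$, the low-index part is $O((N/n)^{2r})\to0$ since $(i/n)^{2r}$ is negligible there, while the high-index part is at most $(\alpha+\delta)^2\,\E{(L_n/n)^{2r}}$. Because $\E{(L_n/n)^{2r}}\to\E{\tau^{2r}}$ and $\delta$ is arbitrary, $\limsup_n\Vert Y_n-W\Vert_2^2\le2\alpha^2\E{\tau^{2r}}$. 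Combining with $\varepsilon_n\to0$ gives $\alpha\le\alpha\sqrt{2\E{\tau^{2r}}}$. The condition Eq.\ \eqref{eqCondpni}, being an upper Riemann sum for $\E{\tau^{2r}}$ (as $x\mapsto x^{2r}$ is increasing), forces $2\E{\tau^{2r}}<1$, so $\sqrt{2\E{\tau^{2r}}}<1$ and therefore $\alpha=0$. Thus $d(Y_n,Y_\infty)\to0$, and since convergence in $d$ implies weak convergence this is the assertion.
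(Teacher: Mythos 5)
Your proof is correct, and it rests on the same essential insight as the paper's---representing $(L_{n}-1)$ conditionally on $\tau$ as $\mathrm{Binomial}(n-2,\tau)$ so that $L_{n}/n$ and $\tau$ can be coupled to be close in $L_{2}$---but the execution is genuinely different. The paper follows \cite{URos1991} quite literally: it substitutes $\lfloor (n-1)\tau\rfloor+1$ for $L_{n}$ via an explicit Chebyshev bound with a rate sequence $r_{n}=n\ln^{-1}n$, expands with the mean value theorem, compares $C_{n}$ with $C$ through Lemma \ref{lemProp3.2}, and closes with the recursive inequality of Lemma \ref{lemProp3.3} applied to $\sup_{i}d^{2}(G_{i},G_{\infty})$, keeping the finite-$n$ sums $2\sum_{i}p_{n,i}(i/n)^{2r}$ as the contraction factor. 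You instead insert the hybrid variable $W$ and split $d(G_{n},G_{\infty})\le \Vert Y_{n}-W\Vert_{2}+\Vert W-\tilde{Y}_{\infty}\Vert_{2}$: the first term is an exact weighted sum $2\sum_{i}P(L_{n}=i)(i/n)^{2r}a_{i}^{2}$ (the cross terms vanish by centring and the two halves fold together by the symmetry of $q_{n}$), the second tends to zero by bounded convergence with no explicit rates, and the conclusion follows from a $\limsup$/splitting argument with the limiting constant $2\E{\tau^{2r}}<1$, which you correctly extract from Eq.~\eqref{eqCondpni} as an upper Riemann sum. This buys you a shorter argument that dispenses with Lemmata \ref{lemProp3.2} and \ref{lemProp3.3} and the mean-value-theorem bookkeeping, and it identifies the contraction constant transparently; the paper's route stays quantitative at finite $n$, which is what its Lemma \ref{lemProp3.3} is built for. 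Two small points to tidy: a factor $2$ is dropped in your intermediate bound for the high-index part (it reappears correctly in the final $\limsup$), and you should state explicitly that $\E{C(\tau)}=0$ follows from $\E{C_{n}(L_{n})}=0$ together with the $L^{1}$ convergence $C_{n}(L_{n})\to C(\tau)$, so that $S$ indeed maps $D$ into $D$ and $\tilde{Y}_{\infty}$ has law $G_{\infty}$.
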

Notice that as $Y_{1}=0$ and by the definition of the recursion we will have $\E{Y_{n}}=0$ for all $n$.

The Yule tree case will be the limit of $\beta=0$ and this case the proof of the result
will be more straightforward (as commented on in the proof of Thm. \ref{thmYnYinfConv}).

Notice that $L_{n}/n \stackrel{D}{\to} \tau $.
It would be tempting to suspect that Thm. \ref{thmYnYinfConv} should 
be the conclusion of a general result related to the contraction method
(as presented in Eq. $(8.12)$, p. $351$ in \cite{MDrm2009}).
However, to the best of my knowledge, general results assume $L_{2}$ convergence 
of $L_{n}/n$ (e.g. Thm. $8.6$, p. $354$ in \cite{MDrm2009}), while in our phylogenetic
balance index case we will have only convergence in distribution. In such
a case it seems that convergence has to be proved case by case
(e.g. examples in \cite{SRacLRus1995}). Here we show the convergence
of Thm. \ref{thmYnYinfConv} similarly as in \cite{URos1991}.

We first derive a lemma that controls the non--homogeneous part of the 
recursion, i.e. $C_{n}(\cdot)$ as defined in Eq. \eqref{eqCni}.

\begin{lemma} [cf. Prop. $3.2$ in {\cite{URos1991}}]\label{lemProp3.2}
Let $C_{n}:\{ 1, \ldots, n-1 \} \to \mathbb{R}$ be as in Eq. \eqref{eqCni}. Then

\be 
\sup\limits_{x \in [0,1)} \biggr \vert C_{n}(\lfloor (n-1)x \rfloor+1) - C(x)\biggr  \vert \le \sup_{i}n^{-r}h_{n}(i) + O(n^{-1}).
\ee
\end{lemma}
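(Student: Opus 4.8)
The plan is to split $C_n$ into its polynomial part and the remainder $h_n$, bound the latter trivially, and show that the polynomial part converges to $C$ at rate $O(n^{-1})$, uniformly in $x$. Writing $i=\lfloor (n-1)x\rfloor + 1$ throughout, I would first factor each monomial as
\be
n^{-r}\, i^{r_1}(n-i)^{r_2}n^{r_3} = \left(\frac{i}{n}\right)^{r_1}\left(\frac{n-i}{n}\right)^{r_2} n^{\,r_1+r_2+r_3-r},
\ee
so that, since every index triple obeys $r_1+r_2+r_3 \le r$, the exponent of $n$ is nonpositive. The terms with $r_1+r_2+r_3=r$ carry $n^0=1$ and, after the identification $C_{r_1,r_2}:=C_{r_1,r_2,\,r-r_1-r_2}$ (which is precisely how the limiting coefficients in Eq. \eqref{eqCtau} arise), reproduce $\sum_{r_1+r_2\le r}C_{r_1,r_2}(i/n)^{r_1}((n-i)/n)^{r_2}$; the terms with $r_1+r_2+r_3<r$ each carry a factor $n^{-1}$ or smaller, and since $i/n,(n-i)/n\in[0,1]$ with finitely many bounded coefficients, they contribute at most $O(n^{-1})$ uniformly in $x$. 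The $h_n$ contribution is bounded in absolute value by $\sup_i n^{-r}h_n(i)$ by definition.

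The key elementary estimate I would establish next is that the discretization $i=\lfloor(n-1)x\rfloor+1$ tracks $x$ to within $O(n^{-1})$. Indeed, $\lfloor(n-1)x\rfloor + 1 = (n-1)x + \delta$ with $\delta = 1 - \{(n-1)x\}\in(0,1]$, where $\{\cdot\}$ denotes the fractional part, whence
\be
\frac{i}{n}-x = \frac{(n-1)x+\delta - nx}{n} = \frac{\delta - x}{n},
\ee
and since $x\in[0,1)$ and $\delta\in(0,1]$ this gives $|i/n - x|\le 1/n$ for every $x$.

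It then remains to compare the leading polynomial with $C(x)$. Since $(n-i)/n = 1-i/n$, each leading monomial equals $f_{r_1,r_2}(i/n)$ with $f_{r_1,r_2}(t):=t^{r_1}(1-t)^{r_2}$, while $C(x)=\sum_{r_1+r_2\le r}C_{r_1,r_2}f_{r_1,r_2}(x)$. Each $f_{r_1,r_2}$ is a polynomial, hence Lipschitz on the compact interval $[0,1]$ with some constant $\kappa_{r_1,r_2}$; summing over the finitely many index pairs and using $|i/n-x|\le 1/n$ yields
\be
\left| \sum_{r_1+r_2\le r} C_{r_1,r_2}\left( f_{r_1,r_2}(i/n) - f_{r_1,r_2}(x)\right)\right| \le \frac{K}{n}
\ee
with $K = \sum_{r_1+r_2\le r}|C_{r_1,r_2}|\,\kappa_{r_1,r_2}$ independent of $x$. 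Collecting the three contributions---the $h_n$ term, the subleading polynomial terms, and this Lipschitz error---and taking the supremum over $x\in[0,1)$ gives the claimed bound $\sup_i n^{-r}h_n(i) + O(n^{-1})$.

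I expect the only genuine subtlety to be careful bookkeeping rather than depth: one must keep every estimate uniform in $x$ (which the bound $|i/n-x|\le 1/n$ guarantees) and be explicit about the identification $C_{r_1,r_2}=C_{r_1,r_2,\,r-r_1-r_2}$ linking the two coefficient families, since without it the limit function $C$ in Eq. \eqref{eqCtau} would not be the correct target. The mean-zero constraints $\E{C_n(L_n)}=0$ and $\E{C(\tau)}=0$ play no role in this particular uniform estimate; they are consistency requirements used elsewhere in the contraction argument.
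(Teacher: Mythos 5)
Your proposal is correct and follows essentially the same route as the paper's proof: bound the $h_{n}$ remainder by its supremum, observe that $\lvert i/n - x\rvert \le 1/n$ for $i=\lfloor (n-1)x\rfloor+1$, and use the mean value theorem (equivalently, Lipschitz continuity of the monomials $t^{r_{1}}(1-t)^{r_{2}}$ on $[0,1]$) to get a uniform $O(n^{-1})$ error. If anything, you are more explicit than the paper about the identification $C_{r_{1},r_{2}}=C_{r_{1},r_{2},r-r_{1}-r_{2}}$ and about why the subleading monomials with $r_{1}+r_{2}+r_{3}<r$ contribute only $O(n^{-1})$, a bookkeeping step the paper leaves implicit.
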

\begin{proof}
For $1 \le \lfloor (n-1)x \rfloor +1 \le n-1$ and writing $i=\lfloor (n-1)x \rfloor+1$ we have 
due to the representation of Eqs. \eqref{eqCtau} and \eqref{eqCni}

$$
\begin{array}{l}
\biggr \vert C_{n}(\lfloor (n-1)x \rfloor+1) - C(x)\biggr  \vert 
\le \max\{C_{r_{1},r_{2}} \} \left(
\biggr \vert \left(\frac{i}{n} \right)^{r} -x^{r}\biggr  \vert +
\right. \\ \left.
\biggr \vert \left(1-\frac{i}{n} \right)^{r} -(1-x)^{r} \biggr \vert +
\sum\limits_{r_{1}+r_{2}\le r} \biggr \vert
\left(\frac{i}{n} \right)^{r_{1}}\left(1-\frac{i}{n} \right)^{r_{2}} -x^{r_{1}}(1-x)^{r_{2}}
 \right) 
\biggr \vert
\\ 
+ \sup\limits_{i}n^{-r}h_{n}(i).
\end{array}
$$
Bounding the individual components, using the mean value theorem and that by
construction $x$ cannot differ from $i/n$ by more than $1/n$ we have

$$
\begin{array}{l}
\biggr \vert \left(\frac{i}{n} \right)^{r} -x^{r} \biggr \vert 
\le r \biggr\vert \frac{i}{n} -x \biggr\vert \le \frac{r}{n}
 = O(n^{-1})
\end{array}
$$
and

$$
\begin{array}{l}
\biggr \vert \left(1-\frac{i}{n} \right)^{r} -(1-x)^{r} \biggr \vert 
\le r \biggr\vert \frac{i}{n} -x \biggr\vert \le \frac{r}{n}
= O(n^{-1}).
\end{array}
$$
Furthermore, immediately by the triangle inequality and the two above inequalities

$$
\begin{array}{l}
\biggr \vert \left(\frac{i}{n} \right)^{r_{1}}\left(1-\frac{i}{n} \right)^{r_{2}} -x^{r_{1}}(1-x)^{r_{2}}\biggr \vert 
= O(n^{-1}).
\end{array}
$$
\end{proof}

\begin{lemma}[cf. Prop. $3.3$ in {\cite{URos1991}}]\label{lemProp3.3}
Let $a_{n}$, $b_{n}$, $p_{n,i}$, $n\in \mathbb{N}$ be three sequences such that
$0 \le b_{n} \to 0$ with $n$,  $0\le p_{n,i} \le 1$, 

\be \label{eqanconvpn}
0 \le a_{n+1} \le 2 \sum\limits_{i=1}^{n}p_{n,i}\left(\frac{i}{n}\right)^{R}\left(\sup\limits_{i \in \{1,\ldots,n\}}a_{i} \right) + b_{n}.
\ee
and

$$
0 < 2\sum\limits_{i=1}^{n}p_{n,i}\left(\frac{i}{n}\right)^{R} = C < 1.
$$
Then $\lim_{n\to \infty} a_{n} = 0$.
\end{lemma}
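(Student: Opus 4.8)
The plan is to prove $a_n\to 0$ by first showing that $(a_n)$ stays bounded and then running a self-improving estimate on $\alpha:=\limsup_{n\to\infty}a_n$ that exploits the strict inequality $C<1$. This is the standard contraction-method tail estimate (as in Rösler), and the one genuinely delicate point is that the boundary summands, those with $i$ small compared with $n$, have to be controlled through the damping factor $(i/n)^R$ rather than be absorbed into a single global supremum.

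First I would establish boundedness. Writing $A_n:=\sup_{1\le j\le n}a_j$ and using $2\sum_{i=1}^{n}p_{n,i}(i/n)^R=C$, the hypothesis \eqref{eqanconvpn} gives the crude bound $a_{n+1}\le C A_n+b_n$. Since $b_n\to 0$ it is bounded, say $b_n\le B$ for all $n$, so $A_{n+1}=\max\{A_n,a_{n+1}\}\le\max\{A_n,\,C A_n+B\}$. Because $C<1$, whenever $A_n\ge B/(1-C)$ the second entry does not exceed the first, and one concludes inductively that $A_n\le\max\{A_1,\,B/(1-C)\}$ for every $n$. Hence $\sup_n a_n<\infty$ and $\alpha=\limsup_n a_n$ is a finite nonnegative number.

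The core step is to show $\alpha\le C\alpha$, which forces $\alpha=0$. Fix $\varepsilon>0$ and choose $N$ so large that $a_i\le\alpha+\varepsilon$ for all $i\ge N$. For $n>N$ I would split the sum in \eqref{eqanconvpn} at $N$. In the tail $i>N$ I replace each $a_i$ by $\alpha+\varepsilon$ and bound $2\sum_{i>N}p_{n,i}(i/n)^R\le 2\sum_{i=1}^{n}p_{n,i}(i/n)^R=C$, giving a contribution at most $C(\alpha+\varepsilon)$. In the head $i\le N$ I use $a_i\le A:=\sup_j a_j$ together with $(i/n)^R\le (N/n)^R$ and $\sum_{i\le N}p_{n,i}\le 1$, so this part is at most $2A\,(N/n)^R$, which tends to $0$ as $n\to\infty$ with $N$ fixed. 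Adding the vanishing term $b_n$ and taking $\limsup_{n\to\infty}$ yields $\alpha\le C(\alpha+\varepsilon)$; letting $\varepsilon\downarrow 0$ gives $\alpha\le C\alpha$, whence $(1-C)\alpha\le 0$ and therefore $\alpha\le 0$. As $a_n\ge 0$ we conclude $\alpha=0$, i.e. $a_n\to 0$.

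The hard part is exactly this head/tail separation. The early values $a_1,\dots,a_N$ need not be small, so bounding them through the running supremum alone would not decay and would be too weak to force the limit to vanish. What rescues the argument is that, once $N$ is fixed, only finitely many indices are problematic and their combined weight $2\sum_{i\le N}p_{n,i}(i/n)^R\le 2(N/n)^R$ tends to $0$, while every index in the tail already carries a value within $\varepsilon$ of $\alpha$ and is therefore discounted by the contraction factor $C<1$. Keeping the individual weights $(i/n)^R$ attached to the corresponding $a_i$ inside the sum is thus essential; this is the only place where the precise recursive structure, rather than a soft monotonicity argument, is needed.
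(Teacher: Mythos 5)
Your argument is precisely R\"osler's proof of his Proposition 3.3 --- a boundedness step via $A_{n+1}\le\max\{A_n, CA_n+B\}$, followed by the self--improving estimate on $\alpha=\limsup a_n$ that ends in the sandwich $0\le\alpha\le C(\alpha+\epsilon)$ --- and that is exactly what the paper's proof consists of (it defers to R\"osler and records only that final sandwich), so you have reconstructed the intended argument faithfully. One point of friction deserves flagging, however. Your key head/tail decomposition requires each $a_i$ to sit \emph{inside} the sum with its own weight $p_{n,i}(i/n)^{R}$, and you rightly call this structure essential; but the hypothesis \eqref{eqanconvpn} as printed has the single factor $\sup_{i\in\{1,\ldots,n\}}a_i$ pulled \emph{outside} the sum, so that it reads $a_{n+1}\le C\sup_{i\le n}a_i+b_n$. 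From that form the tail indices cannot be discounted separately, and the conclusion actually fails for it: take $C=1/2$, $b_n\equiv 0$, $a_1=1$ and $a_n=1/2$ for $n\ge 2$; the hypothesis holds but $a_n\to 1/2\ne 0$. So what you prove is the (correct, R\"osler--form) statement with the $a_i$ inside the sum --- evidently what is intended --- rather than the statement as literally displayed; your closing remark that keeping the weights attached to the individual $a_i$ is the crux is exactly right and pinpoints this discrepancy. A minor further point: you invoke $\sum_{i\le N}p_{n,i}\le 1$, which is not among the stated hypotheses (only $0\le p_{n,i}\le 1$ is); the cruder bound $\sum_{i\le N}p_{n,i}\le N$ works just as well since $N$ is held fixed as $n\to\infty$.
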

\begin{proof}
The proof is exactly the same as the  proof of Proposition $3.3$ in \cite{URos1991}.
In the last step
we will have with $a:=\limsup a_{n} < \infty$ the sandwiching for all $\epsilon>0$

$$
0 \le a \le C(a+\epsilon).
$$
\end{proof}

Having Lemmata \ref{lemProp3.2} and \ref{lemProp3.3} we turn to showing 
Thm. \ref{thmYnYinfConv}.

\begin{proof}[Proof of Thm. \ref{thmYnYinfConv}]
Denote the law of $Y_{n}$ as $\mathcal{L}(Y_{n})=G_{n}$.
We take $Y_{\infty}$ and $Y_{\infty}'$ independent and
distributed as $G_{\infty}$, the fixed point of $S$. Then, for $i=1,\ldots,n-1$
we choose independent versions of $Y_{i}$ and $Y_{i}'$.
We need to show $d^{2}(G_{n},G_{\infty}) \to 0$. As the metric is the infimum over all pairs of random
variables that have marginal distributions $G_{n}$ and $G_{\infty}$ the obvious choice is 
to take $Y_{n}$, $Y_{\infty}$ such that $L_{n}/n$ will be close to $\tau$ for large $n$. 
The Yule model $(\beta=0)$   was considered in \cite{URos1991}
and there $\tau \sim \mathrm{Unif}[0,1]$
and $L_{n}$ is uniform on $\{1,\ldots,n-1\}$. Hence, $\lfloor (n-1)\tau \rfloor + 1$ will
be uniform on $\{1,\ldots,n-1\}$, remember $P(\tau=1)=0$, and $L_{n}/n  \stackrel{D}{=} (\lfloor (n-1)\tau \rfloor + 1)/n$.
However, when $\beta > 0$ the situation complicates. For a given $n$, $(L_{n}-1)$ is BetaBinomial$(n-2,\beta+1,\beta+1)$ 
distributed (cf. Eq. \ref{eqqni} and  Eqs. $1$ and $3$ in \cite{DAld1996}). Hence, if 
$\tau \sim \mathrm{Beta}(\beta+1,\beta+1)$ and $(L_{n}-1) \sim \mathrm{BetaBinomial}(n-2,\beta+1,\beta+1)$
we do not have $L_{n}/n  \stackrel{D}{=} (\lfloor (n-1)\tau \rfloor + 1)/n$ exactly.
We may bound the Wasserstein metric by any coupling that retains the marginal distributions
of the two random variables. Therefore, from now on we will be considering a version, where conditional on $\tau$,
the random variable $(L_{n}-1)$ is Binomial$(n-2,\tau)$ distributed. 
Let $r_{n}$ be any sequence such that $r_{n}/n \to 0$ and $n/r_{n}^{2} \to 0$, e.g.
$r_{n} = n \ln^{-1} n$. Then, by Chebyshev's inequality

$$
P\left( \vert L_{n} - \Expectation{L_{n} \vert \tau} \vert  \ge r_{n} \biggr\vert \tau \right) \le \frac{n\tau(1-\tau)}{r_{n}^{2}} \le \frac{n}{4r_{n}^{2}} \to 0.
$$
We now want to show
$d^{2}(G_{n},G_{\infty})\to 0$ 
and we will exploit the above coupling in the bound

$$
\begin{array}{l}
d^{2}(G_{n},G_{\infty}) \le
\E{\left(\left(\left(\frac{L_{n}}{n} \right)^{r}Y_{L_{n}} - \tau^{r}Y_{\infty}\right)
+\left(\left(\frac{n-L_{n}}{n} \right)^{r}Y_{n-L_{n}} - (1-\tau)^{r}Y'_{\infty}\right)
\right. \right. \\ \left. \left.
+\left(C_{n}(L_{n})-C(\tau)\right)\right)^{2}
}
=
\E{\left(\left(\frac{L_{n}}{n} \right)^{r}Y_{L_{n}} - \tau^{r}Y_{\infty}\right)^{2}}
\\
+\E{\left(\left(\frac{n-L_{n}}{n} \right)^{r}Y_{n-L_{n}} - (1-\tau)^{r}Y'_{\infty}\right)^{2}}
+\E{\left(C_{n}(L_{n})-C(\tau)\right)^{2}},
\end{array}
$$
where $Y_{\infty},Y'_{\infty} \sim G_{\infty}$ are independent.
Remember that $\E{Y_{i}}=\E{Y_{\infty}}=0$ so that the expectation of the cross products disappears.

Our main step is to have a bound where the $L_{n}/n$ term is replaced by some transformation of $\tau$.
Let $\tilde{r}_{n}$ be a (appropriate) random integer in $\{\pm 1,\ldots,\pm \lceil r_{n} \rceil\}$
and we may write (with the chosen coupling of $L_{n}$ and $\tau$),

$$
\begin{array}{l}
\E{
\left(\left(\frac{L_{n}}{n} \right)^{r}Y_{L_{n}} - \tau^{r}Y_{\infty}\right)^{2}}
=
\E{\Expectation{
\left(\left(\frac{L_{n}}{n} \right)^{r}Y_{L_{n}} - \tau^{r}Y_{\infty}\right)^{2}\biggr \vert \tau }}
\\
= 
\E{\E{
\left(\left(\frac{\lfloor (n-1)\tau \rfloor +1 + \tilde{r}_{n}}{n} \right)^{r}Y_{L_{n}} - \tau^{r}Y_{\infty}\right)^{2}
\biggr\vert  \vert L_{n} - \E{L_{n}} \vert \le r_{n}, \tau
}
 \right. \\ \left. \cdot
P(\vert L_{n} - \E{L_{n}} \vert \le r_{n} \biggr \vert \tau)  }
\\ +
\E{\E{
\left(\left(\frac{L_{n}}{n} \right)^{r}Y_{L_{n}} - \tau^{r}Y_{\infty}\right)^{2}
\biggr \vert \vert L_{n} - \E{L_{n}} \vert \ge r_{n}, \tau
}
 \right. \\ \left. \cdot
P(\vert L_{n} - \E{L_{n}} \vert \ge r_{n} \biggr \vert \tau)  }
\\ \le 
\E{\left(\left(\frac{\lfloor (n-1)\tau \rfloor + 1+ \tilde{r}_{n}}{n} \right)^{r}Y_{L_{n}} - \tau^{r}Y_{\infty}\right)^{2}
}+
\frac{n}{4r_{n}^{2}}\E{\left(\left(\frac{L_{n}}{n} \right)^{r}Y_{L_{n}} - \tau^{r}Y_{\infty}\right)^{2}}
\\ =
\E{\left(
\left(
\left(\frac{\lfloor (n-1)\tau \rfloor +1}{n} \right)^{r}
+r \frac{\tilde{r}_{n}}{n}\left(\frac{\lfloor (n-1)\tau \rfloor +1 + \xi_{\tilde{r}_{n}}}{n} \right)^{r-1}
\right)Y_{L_{n}} - \tau^{r}Y_{\infty}\right)^{2}}
\\+
\frac{n}{4r_{n}^{2}}\E{\left(\left(\frac{L_{n}}{n} \right)^{r}Y_{L_{n}} - \tau^{r}Y_{\infty}\right)^{2}}
%
=
\E{\left(\left(\frac{\lfloor (n-1)\tau \rfloor+1}{n} \right)^{r}Y_{L_{n}} - \tau^{r}Y_{\infty}\right)^{2}}
\\
+r^{2}n^{-2}
\E{\tilde{r}_{n}^{2}\left(\frac{\lfloor (n-1)\tau \rfloor +1 + \xi_{\tilde{r}_{n}}}{n} \right)^{2(r-1)}Y_{L_{n}}^{2}}
\\ +2r n^{-1} \E{\tilde{r}_{n}\left(\frac{\lfloor (n-1)\tau \rfloor + 1+\xi_{\tilde{r}_{n}}}{n} \right)^{r-1}Y_{L_{n}}\cdot
\left(\left(\frac{\lfloor (n-1)\tau \rfloor +1 }{n} \right)^{r}Y_{L_{n}} - \tau^{r}Y_{\infty}\right)}
\\ +
\frac{n}{4r_{n}^{2}}\E{\left(\left(\frac{L_{n}}{n} \right)^{r}Y_{L_{n}} - \tau^{r}Y_{\infty}\right)^{2}},
\end{array}
$$ 
where $\xi_{\tilde{r}_{n}} \in (0,\tilde{r}_{n})$ 
is (a random variable) such that the mean value theorem holds (for the function $(\cdot)^{r}$).
As $Y_{n}$, $Y_{\infty}$ have uniformly bounded second moments and 
$0 \le \xi_{\tilde{r}_{n}} \le \tilde{r}_{n} \le r_{n} \le n$ 
we have, by the assumptions $r_{n} /n \to 0$ and $n/r^{2}_{n}\to 0$, 

$$
\begin{array}{l}
\left(r\frac{r_{n}}{n} \right)^{2}
\E{\left(\frac{\lfloor (n-1)\tau \rfloor + 1+ \xi_{\tilde{r}_{n}}}{n} \right)^{2(r-1)}Y_{L_{n}}^{2}}
+
\frac{n}{4r_{n}^{2}}\E{\left(\left(\frac{L_{n}}{n} \right)^{r}Y_{L_{n}} - \tau^{r}Y_{\infty}\right)^{2}}
\\ 
+2r \frac{r_{n}}{n}\E{\left(\frac{\lfloor (n-1)\tau \rfloor +1 + \xi_{\tilde{r}_{n}}}{n} \right)^{r-1}Y_{L_{n}} \cdot
\left(\left(\frac{\lfloor (n-1)\tau \rfloor +1}{n} \right)^{r}Y_{L_{n}} - \tau^{r}Y_{\infty}\right)}
\to 0
\end{array}
$$
and hence for some sequence $u_{n}\to 0$ we have, 

$$
\begin{array}{l}
\E{\left(\left(\frac{L_{n}}{n} \right)^{r}Y_{L_{n}} - \tau^{r}Y_{\infty}\right)^{2}}
\le 
\E{\left(\left(\frac{\lfloor (n-1)\tau \rfloor+1}{n} \right)^{r}Y_{L_{n}} - \tau^{r}Y_{\infty}\right)^{2}}
+ u_{n}.
\end{array}
$$

Remembering the assumption $\sup_{i} n^{-r} h_{n}(i) \to 0$, 
the other component can be treated in the same way as 
$\E{\left(\left(\frac{L_{n}}{n} \right)^{r}Y_{L_{n}} - \tau^{r}Y_{\infty}\right)^{2}}$ with conditioning 
on $\tau$ and then controlling by $r_{n}$ and Chebyshev's inequality $L_{n}$'s deviation from its expected value. 
We therefore have for some sequence $v_{n} \to 0$

$$
\begin{array}{l}
d^{2}(G_{n},G_{\infty})  \le 
\E{\left(\left(\frac{\lfloor (n-1)\tau \rfloor +1}{n} \right)^{r}Y_{L_{n}} - \tau^{r}Y_{\infty}\right)^{2}}
\\
+\E{\left(C_{n}(\lfloor (n-1)\tau \rfloor+1) - C(\tau)\right)^{2}}
\\
+ \E{\left(\left(\frac{n-\lfloor (n-1)\tau \rfloor -1}{n} \right)^{r}Y_{n-L_{n}} - (1-\tau)^{r}Y_{\infty}'\right)^{2}}
+ v_{n}.
\end{array}
$$
Consider the first term of the right--hand side of the inequality
and denote $d^{2}_{n-1} := \sup_{i \in \{1,\ldots,n-1\}}d^{2}(G_{i},G_{\infty})$

$$
\begin{array}{l}
\E{\left(\left(\frac{\lfloor (n-1)\tau \rfloor +1}{n} \right)^{r}Y_{L_{n}} - \tau^{r}Y_{\infty}\right)^{2}}
\\
=\E{\sum\limits_{i=1}^{n-1}1_{(i-1)/(n-1)<\tau \le i/(n-1)} \left(\left(\frac{i}{n}\right)^{r} Y_{L_{n}}-\tau^{r}Y_{\infty}\right)^{2}}
\\ \le \sum\limits_{i=1}^{n-1}p_{n-1,i} \left(\frac{i}{n}\right)^{2r}\E{\left( Y_{L_{n}}-Y_{\infty}\right)^{2}}
= \sum\limits_{i=1}^{n-1}p_{n-1,i} \left(\frac{i}{n}\right)^{2r}d^{2}_{n-1},
\end{array}
$$
where $p_{n,i}=P( (i-1)/(n-1) < \tau \le i/(n-1))$.
Invoking Lemmata \ref{lemProp3.2}, \ref{lemProp3.3} 
and using the assumption of Eq. \eqref{eqCondpni} with $R=2r$ we have 

$$
\begin{array}{l}
d^{2}(G_{n},G_{\infty}) \le  
2\sum\limits_{i=1}^{n-1}p_{n-1,i} \left(\frac{i}{n}\right)^{2r} d^{2}_{n-1}
+
\left(n^{-r} \sup\limits_{i}h_{n}(i)\right)^{2} +v_{n} + O(n^{-2})
\end{array}
$$
which converges to $0$.

\end{proof}

\section{Limit distribution of the quartet index for Aldous's $\beta\ge 0$--model trees}\label{secLimQI}
We show here that the QIB of Aldous's $\beta\ge 0$--model trees satisfies the conditions of
Thm. \ref{thmYnYinfConv} with $r=4$ and hence the QIB has a well characterized limit
distribution. We define a centred and scaled version of the QIB for Aldous's $\beta\ge 0$--model 
tree on $n \ge 4$ leaves

\be
Y_{n}^{Q} = n^{-4} \left( QIB(T_{n}) - \frac{3\beta+6}{7\beta+18}\binom{n}{4} \right).
\ee
We now specialize Thm. \ref{thmYnYinfConv} to the QIB case and assume $Y_{1}=Y_{2}=Y_{3}=0$ for completeness

\begin{theorem}
The sequence of random variables $Y^{Q}_{n}$
for trees generated by Aldous's $\beta$--model with $\beta\ge 0$
converges with $n\to \infty$ in the Wasserstein d--metric (and hence in distribution) to a
random variable $Y_{Q} \sim \mathcal{Q} \equiv G_{\infty}$ satisfying the following equality in distribution

\be
\begin{array}{l}
Y_{Q} \stackrel{\mathcal{D}}{=} \tau^{4} Y_{Q}' + (1-\tau)^{4}Y_{Q}'' 
+ \frac{3\beta+6}{24(7\beta+18)}\left(\tau^{4} +\left(1-\tau\right)^{4} \right)
- \frac{3\beta+6}{24(7\beta+18)}
\\
+ \frac{1}{4}\tau^{2}(1-\tau)^{2},
\end{array}
\ee
where $\tau\sim F_{\tau}$ is distributed as the Beta distribution of Eq. \eqref{eqBtreedist}, 
$Y_{Q}, Y_{Q}', Y_{Q}'' \sim \mathcal{Q}$
and $Y_{Q}', Y_{Q}'', \tau$ are all independent.
\end{theorem}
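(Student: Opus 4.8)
The plan is to recognise $Y_n^Q$ as a special case of the recursion treated in Theorem \ref{thmYnYinfConv} with $r=4$, and then to verify the hypotheses of that theorem one by one. Writing $c:=\frac{3\beta+6}{7\beta+18}$ and using $QIB(T_i)=i^4Y_i^Q+c\binom{i}{4}$, substitution of the recursion \eqref{eqRecursQI} into the definition of $Y_n^Q$ gives, after dividing by $n^4$,
\bd
Y_n^Q = \left(\frac{L_n}{n}\right)^4 Y_{L_n}^Q + \left(1-\frac{L_n}{n}\right)^4 Y_{n-L_n}^Q + C_n(L_n),
\ed
with offset
\bd
C_n(i)=n^{-4}\left[\,c\binom{i}{4}+c\binom{n-i}{4}+\binom{i}{2}\binom{n-i}{2}-c\binom{n}{4}\,\right].
\ed
This matches the form required in Theorem \ref{thmYnYinfConv}, with $(L_n-1)\sim\mathrm{BetaBinomial}(n-2,\beta+1,\beta+1)$ and $\tau\sim F_\tau$.

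Next I would put $C_n$ into the shape \eqref{eqCni}. Expanding the binomial coefficients, the degree-four part of the bracket is $\frac{c}{24}i^4+\frac{c}{24}(n-i)^4+\frac14 i^2(n-i)^2-\frac{c}{24}n^4$, supplying the constants $C_{4,0,0}=C_{0,4,0}=\frac{c}{24}$, $C_{2,2,0}=\frac14$, $C_{0,0,4}=-\frac{c}{24}$; everything left over is a polynomial $h_n(i)$ of degree at most three in $i$ and $n-i$, so $|h_n(i)|=O(n^3)$ and $\sup_i n^{-4}h_n(i)=O(n^{-1})\to 0$, as required. The centring $\E{C_n(L_n)}=0$ is automatic, because $c\binom{n}{4}$ is the exact mean \eqref{eqMeanVarQI} and therefore satisfies its own one-step identity $c\binom{n}{4}=\E{c\binom{L_n}{4}+c\binom{n-L_n}{4}+\binom{L_n}{2}\binom{n-L_n}{2}}$, of which $C_n(L_n)$ is precisely the centred increment. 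Condition \eqref{eqCondpni} with $r=4$ reads $2\sum_i p_{n,i}(i/n)^8<1$; these sums converge to $2\E{\tau^8}$, and for $\tau\sim\mathrm{Beta}(\beta+1,\beta+1)$ with $\beta\ge 0$ one has $\E{\tau^8}\le 1/9$ (attained at $\beta=0$ and decreasing as the law concentrates about $1/2$), so $2\E{\tau^8}\le 2/9<1$ with comfortable margin; the finitely many small-$n$ values, where $Y_1=Y_2=Y_3=0$ trivialises the recursion, do not affect the limit (Lemma \ref{lemProp3.3} only needs the bound eventually).

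The one hypothesis that is not automatic, and which I regard as the crux, is the uniform bound on $\E{(Y_n^Q)^2}$. This is exactly where the variance asymptotics \eqref{eqMeanVarQI} enter: since $\variance{QIB(T_n)}=O(n^8)$ and $Y_n^Q$ is the $n^{-4}$-scaled centred index, $\E{(Y_n^Q)^2}=\variance{Y_n^Q}=n^{-8}\variance{QIB(T_n)}$ tends to a finite constant and is in particular uniformly bounded. The scaling exponent $4$ is dictated precisely by this balance: the variance grows like $n^8=(n^4)^2$, so $n^{-4}$ is the unique power keeping the second moment bounded. With this in hand every hypothesis of Theorem \ref{thmYnYinfConv} holds, and I conclude that $Y_n^Q$ converges in the Wasserstein $d$-metric (hence in distribution) to the unique fixed point $G_\infty$ of the operator $S$ in \eqref{eqS}.

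It remains to read off the limiting displacement $C(\tau)$. By Lemma \ref{lemProp3.2} the step function $C_n(\lfloor(n-1)\tau\rfloor+1)$ converges uniformly to the function obtained by replacing $i/n$ by $\tau$ in the degree-four part of $C_n$, namely
\bd
C(\tau)=\frac{3\beta+6}{24(7\beta+18)}\left(\tau^4+(1-\tau)^4\right)-\frac{3\beta+6}{24(7\beta+18)}+\frac14\tau^2(1-\tau)^2,
\ed
which is the $C(\cdot)$ of the form \eqref{eqCtau} figuring in the asserted fixed-point equation. Finally, $\E{C(\tau)}=0$ follows by letting $n\to\infty$ in $\E{C_n(L_n)}=0$ (the $C_n(L_n)$ are uniformly bounded and $L_n/n\stackrel{D}{\to}\tau$), or equivalently by a direct computation with the moments of the $\mathrm{Beta}(\beta+1,\beta+1)$ law; either check confirms that $S$ is well defined and that $Y_Q\sim\mathcal{Q}\equiv G_\infty$ satisfies the stated identity in distribution.
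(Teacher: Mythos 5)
Your proposal is correct and follows essentially the same route as the paper: substitute the QIB recursion into the definition of $Y_n^Q$ to obtain the recursion of Thm.~\ref{thmYnYinfConv} with $r=4$, read off $C_n$ and $C(\tau)$, obtain the uniform second-moment bound from the variance asymptotics \eqref{eqMeanVarQI}, and verify the contraction condition \eqref{eqCondpni}. The only point of divergence is that last verification: the paper bounds $\sum_i p_{n,i}(i/n)^R$ by a Beta-function ratio which it proves is decreasing in $R$ and equals $1/2$ at $R=1$, whereas you use $\E{\tau^{8}}=\prod_{k=1}^{8}\tfrac{\beta+k}{2\beta+k+1}\le 1/9$ via monotonicity in $\beta$ --- an equally valid and slightly more direct check, though you assert rather than prove that monotonicity (it follows from $\tfrac{d}{d\beta}\log\tfrac{\beta+k}{2\beta+k+1}=\tfrac{1-k}{(\beta+k)(2\beta+k+1)}\le 0$).
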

\begin{proof}
Denote by $P_{3}(x,y)$ a polynomial of degree at most three in terms
of the variables $x$, $y$.
From the recursive representation of Eq. \eqref{eqRecursQI} for $n>4$

$$
\begin{array}{l}
Y_{n}^{Q} = n^{-4}\left(QIB(T_{L_{n}}) -\frac{3\beta+6}{(7\beta+18)}\binom{L_{n}}{4} + QIB(T_{n-L_{n}}) - \frac{3\beta+6}{(7\beta+18)}\binom{n-L_{n}}{4}
\right. \\ \left.
+ \binom{L_{n}}{2}\binom{n-L_{n}}{2}
 + \frac{3\beta+6}{(7\beta+18)}\binom{L_{n}}{4} + \frac{3\beta+6}{(7\beta+18)}\binom{n-L_{n}}{4}  - \frac{3\beta+6}{(7\beta+18)}\binom{n}{4} \right)
\\ = 
\left(\frac{L_{n}}{n}\right)^{4}Y_{L_{n}}^{Q} + \left(1-\frac{L_{n}}{n}\right)^{4}Y_{n-L_{n}}^{Q}
+ \frac{1}{4}\left(\frac{L_{n}}{n}\right)^{2}\left(1-\frac{L_{n}}{n}\right)^{2} 
+ \frac{3\beta+6}{24(7\beta+18)}\left(\frac{L_{n}}{n}\right)^{4}
\\
+ \frac{3\beta+6}{24(7\beta+18)}\left(1-\frac{L_{n}}{n}\right)^{4}
- \frac{3\beta+6}{24(7\beta+18)}+ n^{-4}P_{3}(n,L_{n}).
\end{array}
$$
We therefore have $r=4$ and 

$$
\begin{array}{l}
C_{n}(i) = \frac{1}{4}\left(\frac{i}{n}\right)^{2}\left(1-\frac{i}{n}\right)^{2} 
+ \frac{3\beta+6}{24(7\beta+18)}\left( \left(\frac{i}{n}\right)^{4} +\left(1-\frac{i}{n}\right)^{4} \right)
 - \frac{3\beta+6}{24(7\beta+18)}
\\+ n^{-4}P_{3}(n,i).
\end{array}
$$
By the scaling and centring we know that $E{Y_{n}^{Q}}=0$ and 
$E{\left(Y_{n}^{Q}\right)^{2}}$ is uniformly bounded by 
Eq. \eqref{eqMeanVarQI}.
For the Beta law of $\tau$ we need to examine for all $i$

$$
p_{n,i}:=P(\frac{i-1}{n} \le \tau < \frac{i}{n}) = 
\frac{\Gamma(2\beta+2)}{\Gamma^{2}(\beta+1)}
\int\limits_{(i-1)/n}^{i/n}x^{\beta}(1-x)^{\beta} \ud x.
$$
We consider two cases
\begin{enumerate}
\item $\beta>0$, we have to check if the condition of Eq. \eqref{eqCondpni} is satisfied.
Let 

$$
B_{x}(\beta+1,\beta+1) = \int\limits_{0}^{x} u^{\beta}(1-u)^{\beta} \ud u
$$
be the incomplete Beta function. Then,

$$
\begin{array}{rcl}
p_{n,i} & = & \frac{\Gamma(2\beta+2)}{\Gamma^{2}(\beta+1)}\left( B_{i/n}(\beta+1,\beta+1) - B_{(i-1)/n}(\beta+1,\beta+1) \right)
\\ &= &n^{-1} \frac{\Gamma(2\beta+2)}{\Gamma^{2}(\beta+1)}B_{\xi}'(\beta+1,\beta+1)
\end{array}
$$
by the mean value theorem for some $\xi \in ((i-1)/n,i/n)$. Obviously

$$
B_{\xi}'(\beta+1,\beta+1) = \xi^{\beta}(1-\xi)^{\beta} \le \left(\frac{i}{n}\right)^{\beta}\left(1-\frac{i-1}{n}\right)^{\beta}
$$
and now 

\be \label{eqpnibound}
\begin{array}{l}
\sum\limits_{i=1}^{n}p_{n,i} \left(\frac{i}{n} \right)^{R} 
\le \frac{\Gamma(2\beta+2)}{\Gamma^{2}(\beta+1)}
n^{-1}  \sum\limits_{i=1}^{n} \left(\frac{i}{n} \right)^{R}  \left(\frac{i}{n}\right)^{\beta}\left(1-\frac{i-1}{n}\right)^{\beta}
\\ \to \frac{\Gamma(2\beta+2)}{\Gamma^{2}(\beta+1)}\int\limits_{0}^{1}u^{\beta+R}(1-u)^{\beta} \ud u
= \frac{\Gamma(2\beta+2)}{\Gamma^{2}(\beta+1)} \frac{\Gamma(\beta+R+1)\Gamma(\beta+1)}{\Gamma(2\beta+R+2)}
\\ = \frac{\Gamma(2\beta+2)}{\Gamma(\beta+1)} \frac{\Gamma(\beta+R+1)}{\Gamma(2\beta+R+2)}.
\end{array}
\ee
Take $1<R_{1}<R_{2}$ and consider the ratio

$$
\begin{array}{l}
A=\frac{\Gamma(2\beta+2)}{\Gamma(\beta+1)} \frac{\Gamma(\beta+R_{2}+1)}{\Gamma(2\beta+R_{2}+2)}
\left(
\frac{\Gamma(2\beta+2)}{\Gamma(\beta+1)} \frac{\Gamma(\beta+R_{1}+1)}{\Gamma(2\beta+R_{1}+2)}
\right)^{-1}
\\~\\
= \frac{\Gamma(\beta+R_{2}+1)}{\Gamma(2\beta+R_{2}+2)} \frac{\Gamma(\beta)}{\Gamma(\beta)} \frac{\Gamma(2\beta+R_{1}+1)}{\Gamma(\beta+R_{1}+1)}
= \frac{B(\beta+1+R_{2},\beta)}{B(\beta+1+R_{1},\beta)}.
\end{array}
$$
The ratio $A<1$ as the Beta function is decreasing in its arguments---hence the derived
upper bound in Eq. \eqref{eqpnibound} is decreasing in $R$. For $R=1$ the bound equals

$$
\frac{\Gamma(2\beta+2)}{\Gamma(\beta+1)} \frac{\Gamma(\beta+2)}{\Gamma(2\beta+3)}
= 
\frac{\Gamma(2\beta+2)}{\Gamma(\beta+1)} \frac{(\beta+1)\Gamma(\beta+1)}{(2\beta+2)\Gamma(2\beta+2)}
= \frac{(\beta+1)}{2(\beta+1)}=\frac{1}{2}
$$
and hence for all $R>1$ and all $\beta>0$

$$
\sum\limits_{i=1}^{n}p_{n,i} \left(\frac{i}{n} \right)^{R}  < \frac{1}{2}.
$$
As in our case we have $r\ge 1$, then for $R=2r\ge 2$
the assumptions of Lemma \ref{lemProp3.3} are satisfied and the statement of the theorem follows through.
\item $\beta=0$, then directly $p_{n,i}=n^{-1}$, Eq. \eqref{eqCondpni} and assumptions
of Lemma \ref{lemProp3.3} are immediately satisfied and the statement of the theorem follows through.
This is the Yule model case, in which the proof of the counterpart of  Thm. \ref{thmYnYinfConv}
is much more straightforward, as mentioned before.
\end{enumerate}
\end{proof}

\begin{remark}
When $\beta <0$ the process $L_{n}/n$ seems to have a more involved
asymptotic behaviour (cf. Lemma $3$ in \cite{DAld1996} in the $\beta \le -1$ case). 
Furthermore, the bounds applied here do not hold for $\beta<0$. 
Therefore, this family of tree models (including the
important uniform model, $\beta=-3/2$) deserves a separate study with respect to its
quartet index.
\end{remark}

\section{Comparing with simulations}\label{secSimul}
To verify the results we compared the simulated values from the limiting theoretical distribution
of $Y_{Q}$ with scaled and centred values of Yule tree QI values. The $500$--leaf Yule trees were simulated
using the \texttt{rtreeshape()} function of the apTreeshape \cite{NBorEDurMBluOFra2012} 
R \cite{R} package and Tom\'as Mart\'inez--Coronado's 
in--house Python code. Then, for each tree the QI value was calculated by 
Gabriel Valiente's and Tom\'as Mart\'inez--Coronado's
in--house programs. The raw values $QIB(\mathrm{Yule}_{500})$ were scaled and centred as

$$
Y^{Q}_{n} = 500^{-4} \left(QIB(\mathrm{Yule}_{500}) - \frac{1}{3}\binom{500}{4} \right).
$$
The $Y_{Q}$ values were simulated using the proposed in \cite{KBar2018} heuristic Algorithm $3$
(R code in Appendix). The results of the simulation are presented in Fig. \ref{figSimQIYule}.

\begin{figure}[!ht]
\centering
\includegraphics[width=0.48\textwidth]{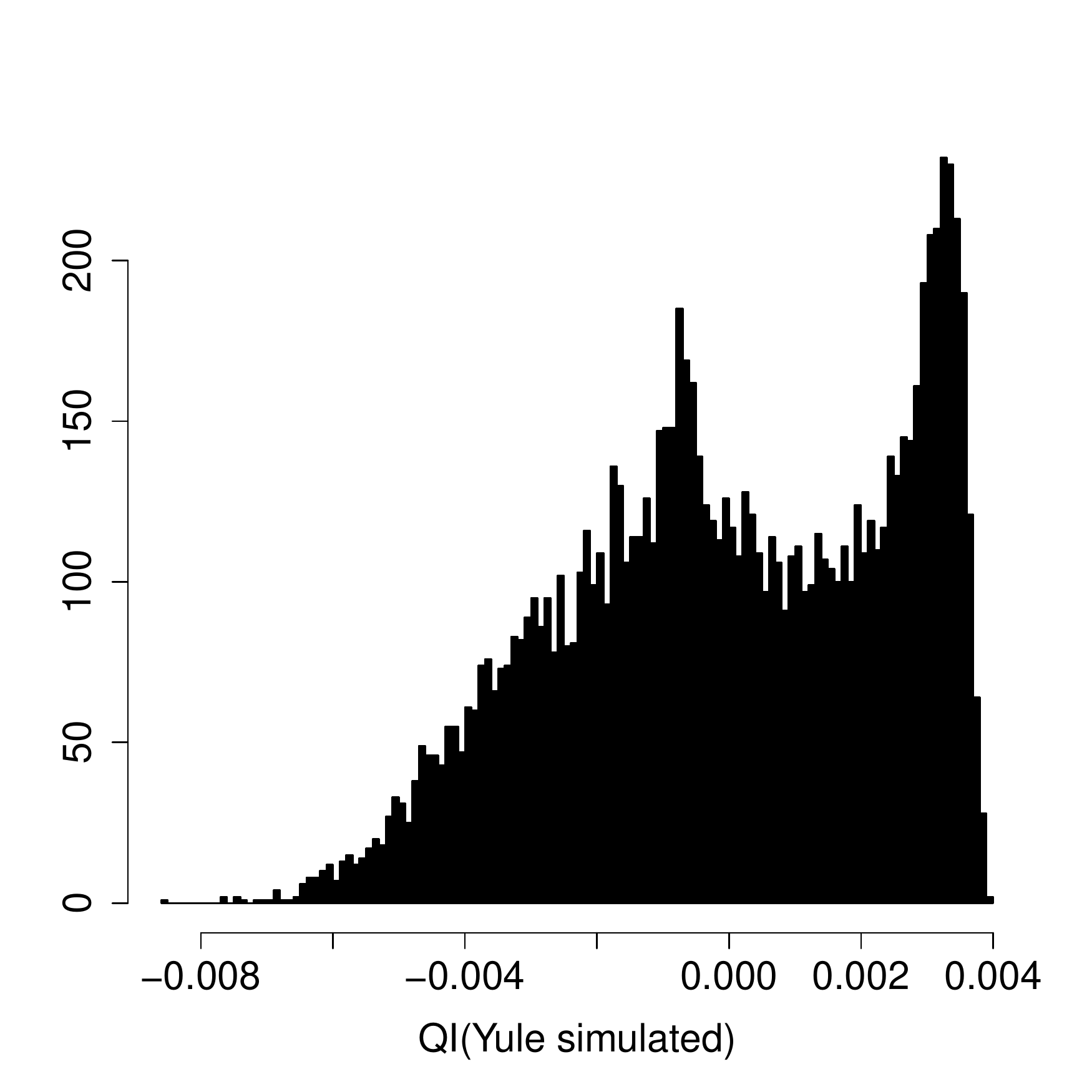}
\includegraphics[width=0.48\textwidth]{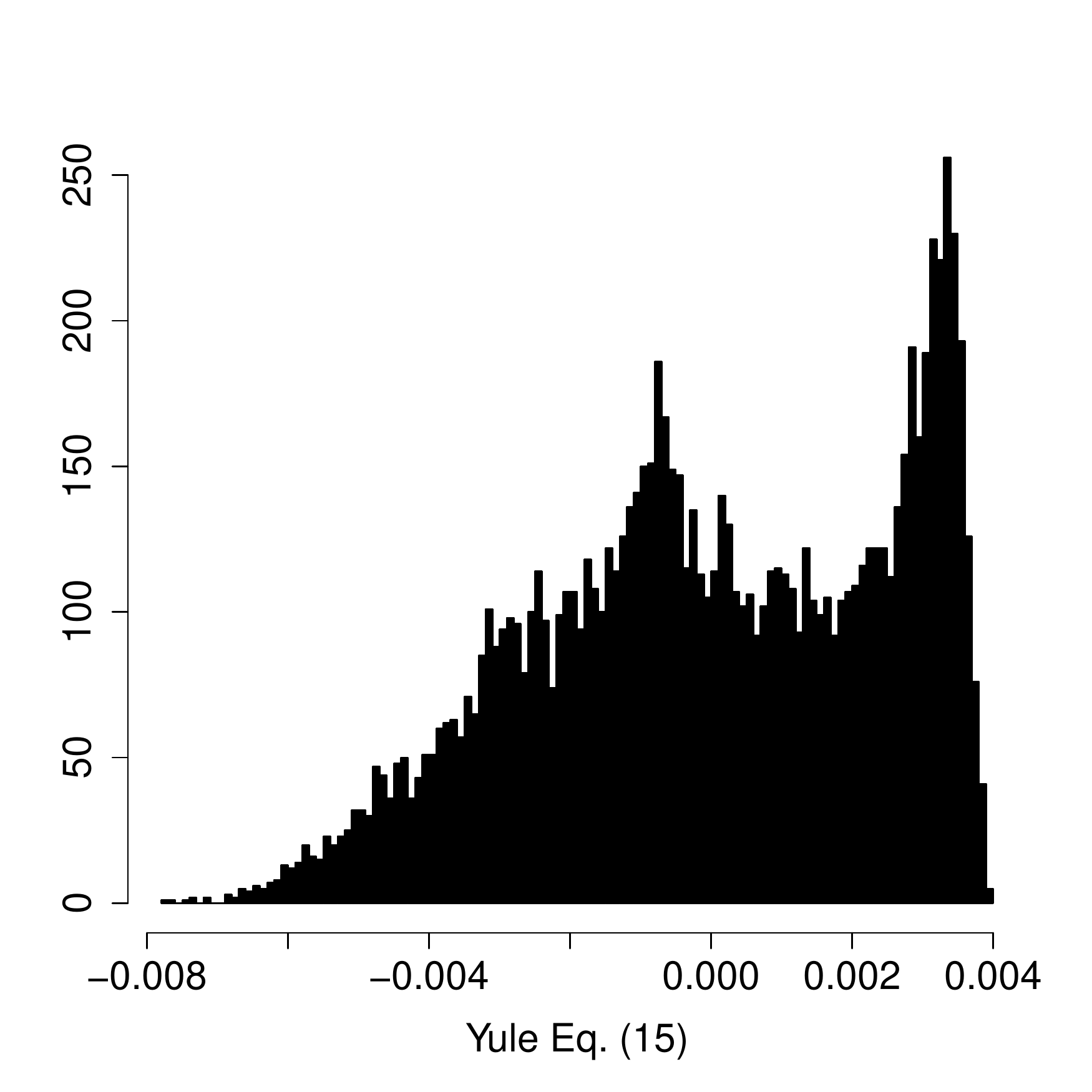}
\caption{
Left: histogram of scaled and centred simulated values of the QIB for the Yule tree, $Y^{Q}_{n}$,
right: histogram of $Y_{Q}$ for the Yule model, $\beta=0$. The mean, variance, skewness and excess kurtosis
of the simulated values are
$-3.177\cdot10^{-6}$, $6.321\cdot 10^{-6}$, $-0.308$, $-0.852$ (left, simulated values)
and 
$1.682\cdot 10^{-5}$, $6.38\cdot 10^{-6}$, $-0.317$, $-0.834$ (right, theoretical values
of the  heuristic Algorithm $3$ in \cite{KBar2018} with recursion depth $15$).
For $\beta=0$ the leading constant of the variance in Eq. \eqref{eqMeanVarQI} is 
$5/(24\cdot 33075)\approx 6.299\cdot 10^{-6}$.
\label{figSimQIYule}
}
\end{figure}

\subsection*{Acknowledgements}
I would like to thank the whole 
Computational Biology and Bioinformatics Research Group of the Balearic Islands University
for hosting me on multiple occasions, introducing me to problems related with the
quartet index and for valuable comments on this manuscript. 
The simulated values of the quartet index for the Yule tree were provided
by Gabriel Valiente and Tom\'as Mart\'inez--Coronado.
I was supported by the Knut and Alice Wallenberg Foundation
and am by the Swedish Research Council (Vetenskapsr\aa det) grant no. $2017$--$04951$.
My collaboration with the Balearic Islands University was partially supported by the 
the G S Magnuson Foundation of the Royal Swedish Academy of Sciences
(grants no. MG$2015$--$0055$, MG$2017$--$0066$)
and The Foundation for Scientific Research and Education in Mathematics (SVeFUM).
I am grateful to anonymous reviewers for comments that significantly improved 
this work.

\bibliographystyle{plainnat}
\bibliography{QIindex_arXiv.bib}
\clearpage
\section*{Appendix: R code for simulating from the limit distribution of the normalized quartet index}
\lstinputlisting{Bartoszek_QIlimRcode.R}

\end{document}